\newcommand{\mket}[1]{| #1 \rangle}
\newcommand{\mbra}[1]{\langle #1 |}
\newcommand{\mtr}[1]{\mathrm{Tr}\left( #1 \right)}
\newcommand{\mptr}[2]{\mathrm{Tr_{#2}}\left( #1 \right)}
\newcommand{\nH}{\mathcal{H}}
\newcommand{\imag}{\mathbbm{i}}
\newcommand{\nHTOT}{H_{\mathrm{TOT}}}
\newcommand{\keywords}[1]{\par\addvspace\baselineskip\noindent\keywordname\enspace\ignorespaces#1}
\begin{document}

%\frontmatter
%\pagestyle{headings}
%\addtocmark{Qutrit switch for quantum networks}
%\tableofcontents

\mainmatter

\title{Quantum Coherence Measures \\ for Quantum Switch}

%\authorrunning{Author}
%\tocauthor{Author}

%\authorrunning{}
%\tocauthor{}

\authorrunning{Marek Sawerwain \and Joanna Wi\'sniewska}
\tocauthor{Marek Sawerwain and Joanna Wi\'sniewska}

\author{Marek Sawerwain\inst{1} \and Joanna Wi\'sniewska\inst{2}}
\institute{Institute of Control \& Computation Engineering \\
University of Zielona G\'ora, Licealna 9, Zielona G\'ora 65-417, Poland \\
\email{M.Sawerwain@issi.uz.zgora.pl}
\and
Institute of Information Systems, Faculty of Cybernetics, \\
Military University of Technology,  Gen.~W.~Urbanowicza 2, 00-908 Warsaw, Poland \\
\email{jwisniewska@wat.edu.pl}
}

%\author{Author}
%\institute{Affilation}

\maketitle

\begin{abstract}
We suppose that a structure working as a quantum switch will be a significant element of future networks realizing transmissions of quantum information. In this chapter we analyze a process of switch's operating -- especially in systems with a noise presence. The noise is caused by a phenomenon of quantum decoherence, i.e. distorting of quantum states because of an environmental influence, and also by some imperfections of quantum gates' implementation. In the face of mentioned problems, the possibility of tracing the switch's behavior during its operating seems very important. To realize that we propose to utilize a Coherence measure which, as we present in this chapter, is sufficient to describe operating of the quantum switch and to verify correctness of this process. It should be also stressed that the value of Coherence measure may be estimated by a quantum circuit, designed especially for this purpose.
\keywords{quantum information transfer, quantum switch, quantum coherence}
\end{abstract}

\section{Introduction} \label{lbl:sec:introduction:MS:JW:CN:2018}

A field of quantum computing \cite{Wegrzyn2001}, \cite{Wegrzyn2002} called quantum communication \cite{Imre2012}, \cite{VanMeter2014}, \cite{Cariolaro2015}, \cite{Li2016} is a dynamically growing research area of network science \cite{Goscien2017}, \cite{Markowski2017}. Quantum communication deals, inter alia, with processing of quantum information. The concept of information transmission/switching in a quantum channel is one of the basic issues connected with quantum communication. In case of quantum information its transmission and switching are non-trivial problems because of non-cloning theorem \cite{WoottersZurek}, \cite{Jozsa2002b} and decoherence \cite{Gawron2012}.  

A definition of quantum switch, realizing swapping information between channels A and B, was presented in \cite{Ratan2007}. A significant issue is to trace the operating of quantum switch, especially when distortions (caused by an influence of external environment) of quantum information occur. Correctness of the process may be, for example, verified with use of quantum entanglement phenomenon -- some basic information concerning this method was contained in \cite{Brus2011} and \cite{Sawerwain2017}. In this chapter we suggest utilizing the Coherence measure \cite{Baumgratz2014} to evaluate the correctness of switch's operating. Furthermore, following the results described in \cite{Girolami2014} we propose of a quantum circuit that estimates of Coherence measure value for a quantum switch. It should be stressed that estimating the value of Coherence measure allows to trace the behavior of quantum system and clearly points out if it works properly.

The chapter is organized as follows. In Sec.~\ref{lbl:sec:quantum:switch:MS:JW:CN:2018} the basic information concerning the quantum switch was presented. This section contains description of the switch as a Hamiltonian and as a time-dependent unitary operation. There is also an example of distortions modeled with use of Dzyaloshinskii-Moriya interaction \cite{Dzyaloshinskii1958}, \cite{Moriya1960}. Sec.~\ref{lbl:sec:quantum:coherence:MS:JW:CN:2018} contains definitions of Coherence measure, its properties and two examples of popular realizations.

A description of measures in context of the switch are presented in Sec.~\ref{lbl:sec:QC:for:QS:MS:JW:CN:2018}. We showed direct analytical formulas expressing Coherence measure for the switch during its operating. Conducted numerical experiments demonstrate changes in value of Coherence measure for systems with and without noise. We calculated also an analytical formula describing a difference between the values of Coherence measure for systems without presence of noise and with distortions modeled as Dzyaloshinskii-Moriya interaction.

The summary and plans for further work are presented in Sec.~\ref{lbl:sec:conclusions:MS:JW:CN:2018}. A list of references to other works ends this chapter.

\section{Quantum Switch} \label{lbl:sec:quantum:switch:MS:JW:CN:2018}

A quantum switch, analyzed in this chapter, is a system of three qubits denoted as: $A$, $B$, $C$. The states of qubits $A$ and $B$ are unknown: 
\begin{equation}
\mket{A} = \alpha_A \mket{0} + \beta_A \mket{1} , \;\;\;  \mket{B} = \alpha_B \mket{0} + \beta_B \mket{1} .
\end{equation}
The state of $C$ is known and preserves only two possible alternatives: $\mket{C} = \mket{0}$ or $\mket{C} = \mket{1}$.

A main task of the quantum switch is swapping an unknown states between qubits $A$ and $B$. This operation is performed conditionally: if the state of qubit $\mket{C}$ is $\mket{0}$ then there is no action; but if the state of qubit $\mket{C}$ is $\mket{1}$ then the values of quantum states are exchanged between qubits $A$ and $B$:
\begin{equation}
\mket{AB0} \rightarrow \mket{AB0}, \;\;\; \mket{AB1} \rightarrow  \mket{BA1} .
\end{equation}

The quantum switch may be depict as a circuit built of three quantum gates: two CNOT gates and one Toffoli gate. The conditional swapping of quantum states, realized by the mentioned circuit, is presented at Fig.~\ref{lbl:fig:QS:EE:MS:JW:CN:2018}.

\begin{figure}
\begin{center}
\includegraphics[height=3cm]{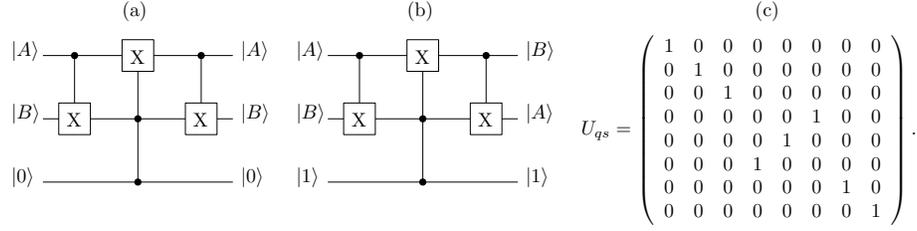}
\end{center}
\caption{An exemplary circuit presenting the action performed by the quantum switch. If the state of third qubit is $\mket{0}$ (case (a)) the switch does not swap the states of first two qubits. When the state of the last qubit is expressed as $\mket{1}$ (case (b)) the circuit swaps states $\mket{A}$ and $\mket{B}$. Case (c) depicts the matrix form of unitary operator which performs action of quantum switch}
\label{lbl:fig:QS:EE:MS:JW:CN:2018}
\end{figure}

Utilizing the circuit, shown at Fig.~\ref{lbl:fig:QS:EE:MS:JW:CN:2018}, we can denote a matrix form (also shown at Fig.~\ref{lbl:fig:QS:EE:MS:JW:CN:2018} -- case (c)) of an operator realizing the operation performed by the quantum switch  describes the complete process of information switching without details concerning particular steps of the process. Calculating a time-dependent unitary operation will allow to present the flow of information through the quantum switch during its operating time. To obtain this unitary operation we need to compute a Hamiltonian describing the dynamics of information flow in a quantum register:

\begin{equation}
H_{qs} = \mket{011}\mbra{101} + \mket{101}\mbra{011}  = \left(
\begin{array}{cccccccc}
 0 & 0 & 0 & 0 & 0 & 0 & 0 & 0 \\ % 000
 0 & 0 & 0 & 0 & 0 & 0 & 0 & 0 \\ % 001
 0 & 0 & 0 & 0 & 0 & 0 & 0 & 0 \\ % 010
 0 & 0 & 0 & 0 & 0 & 1 & 0 & 0 \\ % 011
 0 & 0 & 0 & 0 & 0 & 0 & 0 & 0 \\ % 100
 0 & 0 & 0 & 1 & 0 & 0 & 0 & 0 \\ % 101
 0 & 0 & 0 & 0 & 0 & 0 & 0 & 0 \\ % 110
 0 & 0 & 0 & 0 & 0 & 0 & 0 & 0 \\ % 111
\end{array}
\right) .
\end{equation}

The operator $H_{qs}$ is Hermitian, so introducing a variable $t$ we may obtain a~time-dependent unitary operation:
\begin{equation}
U_{qs}(t) = e^{ -\imag t H_{qs}}, 
\hat{U}_{qs}(t) = \left(
\begin{array}{cccccccc}
 1 & 0 & 0 & 0 & 0 & 0 & 0 & 0 \\
 0 & 1 & 0 & 0 & 0 & 0 & 0 & 0 \\
 0 & 0 & 1 & 0 & 0 & 0 & 0 & 0 \\
 0 & 0 & 0 & \imag \cos(t) & 0 & \sin(t) & 0 & 0 \\
 0 & 0 & 0 & 0 & 1 & 0 & 0 & 0 \\
 0 & 0 & 0 & \sin(t) & 0 & \imag \cos(t) & 0 & 0 \\
 0 & 0 & 0 & 0 & 0 & 0 & 1 & 0 \\
 0 & 0 & 0 & 0 & 0 & 0 & 0 & 1 \\
\end{array}
\right).
\label{lbl:eq:unitary:evolution:for:QS:MS:JW:CN:2018}
\end{equation}
The operation $U_{qs}(t)$ causes the change of local phase therefore using the phase-flip gate allows to obtain operator $\hat{U}_{qs}(t)$ which, naturally, does not introduce any change of local phase into the system.

Both, shown above, forms of unitary operation $U_{qs}(t)$ let to trace the process of switch's operating according to time $t$. If the switch acts on a state $\mket{AB1}$ and there is no correction of the local phase then the final state is a pure quantum state:
\begin{equation}
U_{qs}(t) \mket{\Psi_{qs}} = \mket{\Psi^{t}_{qs}} = \left(
\begin{array}{c}
 0 \\
 \alpha _0 \alpha _1 \\
 0 \\
 \cos (t) \alpha _0 \beta _1 - \imag \sin (t) \alpha _1 \beta _0 \\
 0 \\
 \cos (t) \alpha _1 \beta _0 - \imag \sin (t) \alpha _0 \beta _1 \\
 0 \\
 \beta _0 \beta _1 \\
\end{array}
\right) ,
\label{lbl:eqn:QS:qstate:MS:JW:CN2018}
\end{equation}
where $t \in \langle 0, \frac{\pi}{2} \rangle$.

And representation of this operation as a density matrix $\rho$ is also given:
\begin{equation}
\rho(t) = U_qs(t)  \mket{\Psi_{qs}^{t}} \mbra{\Psi_{qs}^{t}}  U^{\dagger}_{qs}(t) .
\end{equation}

In Sec.~\ref{lbl:sec:QC:for:QS:MS:JW:CN:2018} we are going to present the values of Coherence measure with the presence of noise. To do that the following Hamiltonian, describing Dzyaloshinskii-Moriya (DM) interaction \cite{Dzyaloshinskii1958}, \cite{Moriya1960}, will be used: 
\begin{equation}
H_{\mathrm{DM}} = D_z \cdot (\sigma_A^x \otimes \sigma_B^y - \sigma_A^y \otimes \sigma_B^x )
\end{equation}

The notation of operator $\sigma_A^x$ informs us that the Pauli operator X is used on the qubit A and similarly $\sigma_B^y$ means that the Pauli operator Y is used on the qubit B. The symbol $D_z$ represents the vector of process intensity. A Hamiltonian $H_{TOT}$ joins the switch and DM interaction:
\begin{equation}
\nHTOT = t \cdot H_{qs} + D_z \cdot H_{\mathrm{DM}}, \;\;\; U_{qs}^{\mathrm{DM}}(t, D_z) = e^{ -\imag (t \cdot H_{qs} + D_z \cdot H_{\mathrm{DM}}) }.
\end{equation}
where $U_{qs}^{\mathrm{DM}}$ stands for the unitary operation. It should be stressed that for $D_z$ we obtain the Hamiltonian describing only the quantum switch's operating.

A state of quantum register including an influence of DM interaction may be expressed as:
\begin{equation}
U_{qs}^{\mathrm{DM}}(t) \mket{\Psi_{qs}} = \mket{\Psi^{U_{qs}^{\mathrm{DM}}(t)}_{qs}} = \left(
\begin{array}{c}
 0 \\
 \alpha _0 \alpha _1 \\
 0 \\
 \frac{(2 D_z -\imag t) \sinh \left( \gamma \right) \alpha _1 \beta _0}{ \gamma }+\cosh \left( \gamma \right) \alpha _0 \beta _1 \\
 0 \\
 \cosh \left( \gamma \right) \alpha _1 \beta _0-\frac{(2 D_z+\imag t) \sinh \left( \gamma \right) \alpha _0 \beta _1}{\gamma} \\
 0 \\
 \beta _0 \beta _1 \\
\end{array}
\right),
\end{equation}
and  $\gamma = \sqrt{-4 D_z^2 - t^2}$. And also as a density matrix $\rho$: %A także jako macierz gęstości $\rho$:
\begin{equation}
\rho(t,D_z) = U(t,D_z)  \mket{\Psi_{qs}} \mbra{\Psi_{qs}}  U^{\dagger}(t,D_z) .
\end{equation}

\section{Quantum Coherence Measures} \label{lbl:sec:quantum:coherence:MS:JW:CN:2018}

Introducing the notion of Coherence needs to point the incoherent quantum states. This approach is similar to the measures of quantum entanglement where a set of separable states has to be specified. For a $d$-dimensional Hilbert space $\nH$ we have to define a computational basis, for example the standard computational basis: $\{ \mket{i} \}, \; \mathrm{for} \; i=0,1,2,3, \ldots, d.$.
The Coherence measure is basis-dependent, i.e. we consequently use the standard computational basis in this chapter.

Incoherent states $\mathcal{I}$ and maximally coherent state $\mket{\phi_d}$ are defined as:
\begin{eqnarray}
\mathcal{I} = \{ \delta | \delta = \sum_{d=0}^{d-1} \delta_i \mket{i}\mbra{i} \}, \;\;\; \mket{\phi_d} = \frac{1}{\sqrt{d}} \sum_{i=0}^{d-1} e^{i \phi_i} \mket{i},
\end{eqnarray}
where $\phi_i$ represents a phase of basis element $\mket{i}$. 

Operations performed on states described as density matrices $\rho$ are termed as Incoherent Operations (IO) if for a Complete Positive and Trace Preserving (CPTP) projection $\Lambda(\rho) = \sum_{n} K_n \rho K_n^\dagger$ we observe that $\frac{K_n \delta K_n^\dagger}{\mathrm{Tr}[K_n \delta K_n^\dagger]} \in \mathcal{I}$ for every value of $n$ and $\delta \in \mathcal{I}$. A set of these operations may be denoted as $\Lambda(\delta) \in \mathcal{I}$ for each $\delta \in \mathcal{I}$ and it is called Maximal Incoherent Operations (MIO). Naturally: $IO \subseteq MIO$.

The Coherence measure for $\rho$ is defined by a real-valued function $C(\rho)$. The function $C(\rho)$ has to fulfil the following properties:
\begin{itemize}
\item[(P1)] $C(\rho) \geq 0, \forall \rho $ and $C(\delta) = 0$ if and only if $\delta \in I$,
\item[(P2)] monotonicity -- the coherence measure $C$ cannot increase its value if refers to an operation included in IO or MIO represented by channel $\Lambda$: $C(\Lambda(\rho)) \leq C(\rho)$,
\item[(P3)] strong monotonicity with post-selection -- for any channel $\Lambda \in IO$, with given set of Kraus operators $\{ K_n \}$, the coherence measure $C$ cannot increase its average value under the post-selection:
\begin{displaymath}
\sum_n p_n C(\rho_n) \leq C(\rho), \; \mathrm{where} \; \rho_n = {K_n \rho K_n^\dagger}{\mathrm{Tr}[K_n \rho K_n^\dagger]},
\end{displaymath}
\item[(P4)] convexity -- the value of coherence measure $C$ cannot be increased by mixing quantum states: $C( \sum_n p_n \rho_n ) \leq \sum_n p_n C( \rho_n )$.
\end{itemize}

Generally, if the measure $C$ fulfils the condition P1 and P2 or P3 then it is a monotone function and in this case the measure is very useful. Nowadays, the most widely use Coherence measures are: $l_1$-norm coherence ($C_{l_1}$) and relative entropy of coherence ($C_{\mathrm{re}}$). These both measures fulfill the above-mentioned properties. 

A significant advantage of both specified measures is their relative simplicity. The $l_1$-norm coherence is presented as a sum of all off-diagonal elements of density matrix and the relative entropy of coherence measure is defined as entropy difference:
\begin{equation}
C_{l_1} ( \rho )  = \sum_{\substack{i,j \\ i \neq j }} | \rho_{i,j} |, \;\;\; C_{\mathrm{re}} ( \rho )  = S(\rho_{\mathrm{diag}}) - S(\rho) .
\label{lbl:eq:norm1:cocherence:measure:MS:JW:CN:2018}
\end{equation}
where $S( \cdot )$ stands for von Neumann entropy and $\rho_{\mathrm{diag}}$ denotes a density matrix without the off-diagonal elements.

\section{Coherence Measures for Quantum Switch} \label{lbl:sec:QC:for:QS:MS:JW:CN:2018}

The Coherence measures will be utilized to assess the correctness of quantum switch operating. To do it we need to calculate values of these measures for two situations: when analyzed quantum state contains noise and without it.

If a quantum state is free from noise, we examine states of two first qubits $A$ and $B$. After a partial trace operation, which aim is to remove a state of controlling qubit $C$, we obtain a reduced density matrix:
\begin{equation}
\mptr{\rho_{ABC}}{C} = \rho_{AB},
\end{equation}
where the states $\mket{A}$ and $\mket{B}$ are unknown. Utilizing measure (\ref{lbl:eq:norm1:cocherence:measure:MS:JW:CN:2018}) and carrying out some necessary transformations allows to calculate the following formula:
\begin{multline}
C_{l1}( \rho_{AB}) = 2 \Biggl( \left| \alpha_0 \alpha_1 \beta_0 \beta_1 \right| + \left| \alpha _0 \alpha _1 \left( \cos(t) \alpha_0 \beta_1 -  \imag \sin(t) \alpha_1 \beta_0\right) \right| \\
 + \left| \beta_0 \beta_1 \left( \cos(t) \alpha_0 \beta_1 - \imag \sin(t) \alpha_1 \beta_0 \right) \right| + \left( \left| \alpha_0 \alpha_1 \right| + \left| \beta_0 \beta_1 \right| \right. \\
 \left.  + \left| \cos(t) \alpha_0 \beta_1 - \imag \sin(t) \alpha_1 \beta_0  \right| \right) \left| \cos(t) \alpha_1 \beta_0 - \imag \sin(t) \alpha_0 \beta_1 \right| \Biggr) .
\label{lbl:eq:cocherence:for:AB:MS:JW:CN:2018}
\end{multline}

A value of Coherence measure $C_{l_1}$ depends on time and states of qubits. However, if the second qubit $\mket{B}=\mket{0}$, we can use (\ref{lbl:eq:cocherence:for:AB:MS:JW:CN:2018}) and compute:
\begin{gather}
C_{l1}( \rho_{A0}) = 2 \left| \sin (t) \alpha_0 \beta_0 \right| +2 \left| \cos (t) \alpha_0 \beta_0\right| +2 \left| \cos (t) \sin (t) \beta_0^2 \right|
\end{gather}
As we can see the value of this measure depends only on time elapsing during the switch's operating. Basing on the fact that the quantum state is normalized, we can directly indicate the constraint concerning the value of $C_{l_1}$ measure for the state $\mket{A0}$:
\begin{equation}
C_{l1}( \rho_{A0})  <  \left ( \varepsilon + (\left| \sin(t)\right| +\left| \cos(t)\right| +\left| \cos(t)\sin(t)\right|) \right),
\end{equation}
where the state $\mket{A}$ is unknown and the constant $\varepsilon \in \mathbb{R}$. 

Generally, the changes of $C_{l_1}$ value allow to trace the work of switch because at the beginning of operating, when $t=0$, the value of $C_{l_1}$ is minimal, then in the middle of the process, that is when $t=\pi/4$, the value of $C_{l_1}$ is maximal. When the switch finishes operating in a moment $t=\pi/2$ the value of $C_{l_1}$ is the same as in the moment $t=0$. Fig.~\ref{lbl:fig:coherence:for:A0:and:AB:states:MS:JW:CN:2018} depicts the changes of $C_{l_1}$ measure for quantum states:
\begin{equation}
\mket{A} = \frac{1}{\sqrt{10}} \mket{0} + \frac{\sqrt{9}}{\sqrt{10}}\mket{1}, \mket{B} = \frac{\sqrt{9}}{\sqrt{10}} \mket{0} + \frac{1}{\sqrt{10}} \mket{1}.
\label{lbl:eq:given:state:MS:JW:CN:2018}
\end{equation}
and, more generally, for the states described as:
\begin{equation}
\mket{A} = \sin(a) \mket{0} + \cos(a) \mket{1}, \;\;\; \mket{B} = \sin(\frac{\pi}{2} - a) \mket{0} + \cos(\frac{\pi}{2} - a) \mket{1}
\label{lbl:eq:state:for:a:paramter:MS:JW:CN:2018}
\end{equation}
where $a \in \langle 0, \frac{\pi}{2} \rangle$.

\begin{remark}
If the time of switch's operating is extended, for example, to $t=\pi$ and the value of the parameter $a$ is increased to $\pi$, we will observe a periodic character in changes of $C_{l_1}$ value.
\end{remark}

\begin{figure}
\begin{center}
\begin{tabular}{ccc}
(a) & & (b) \\
\includegraphics[height=3.0cm]{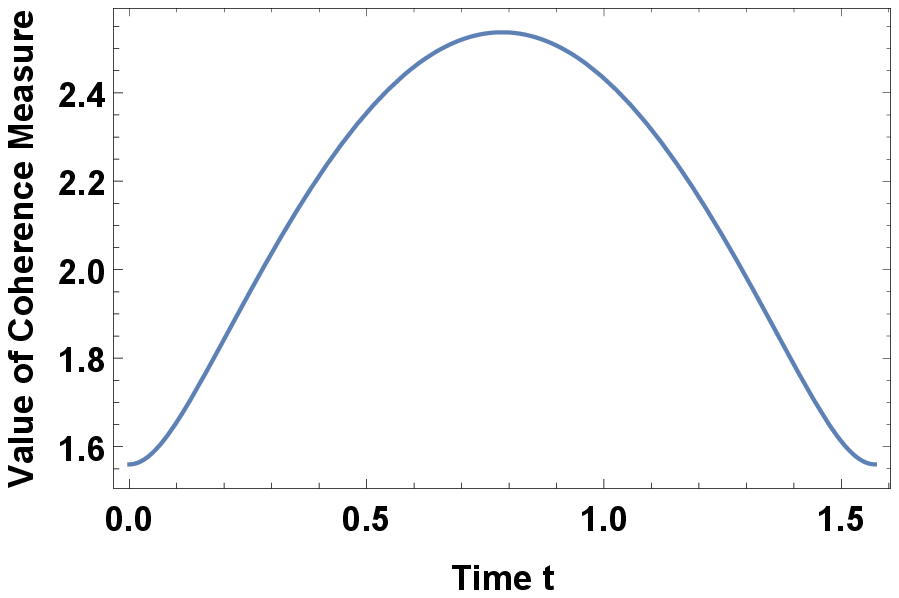} & & \includegraphics[height=3.0cm]{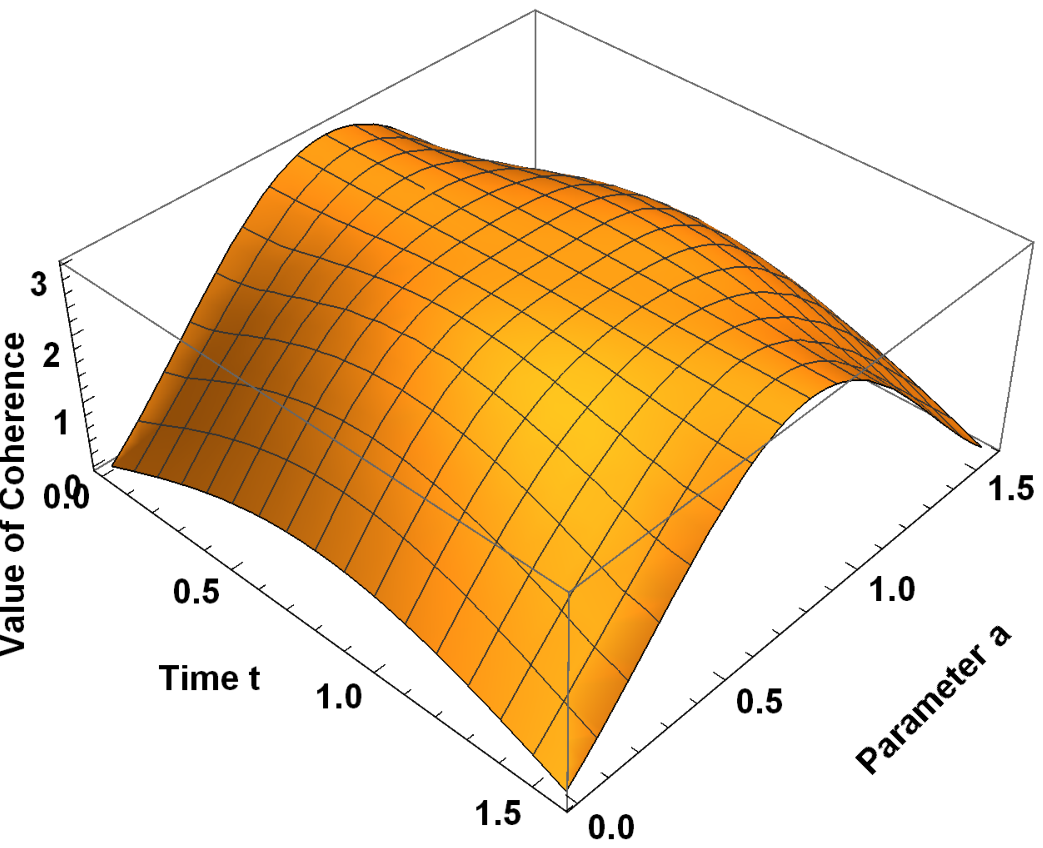} \\
(c) & & (d) \\
\includegraphics[height=3.0cm]{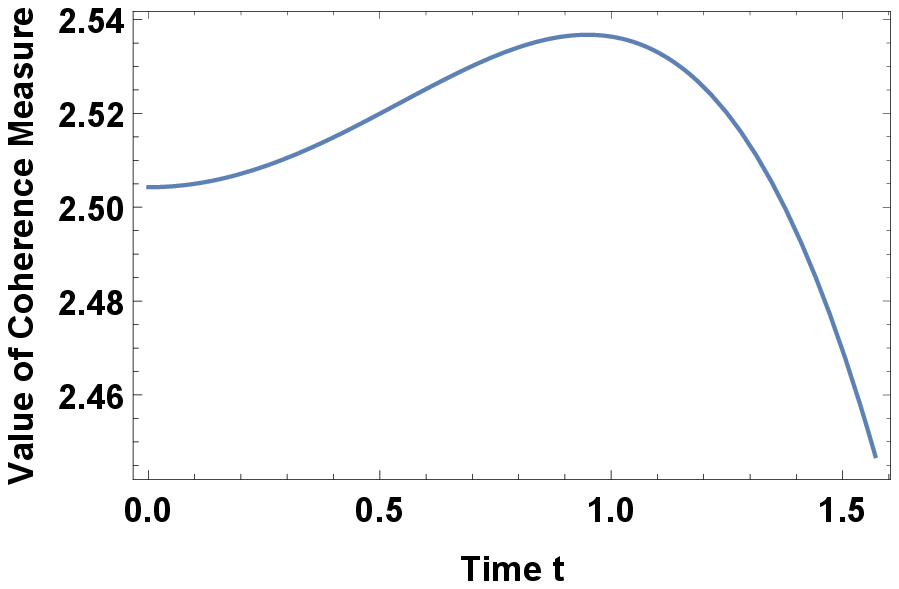} & & \includegraphics[height=3.0cm]{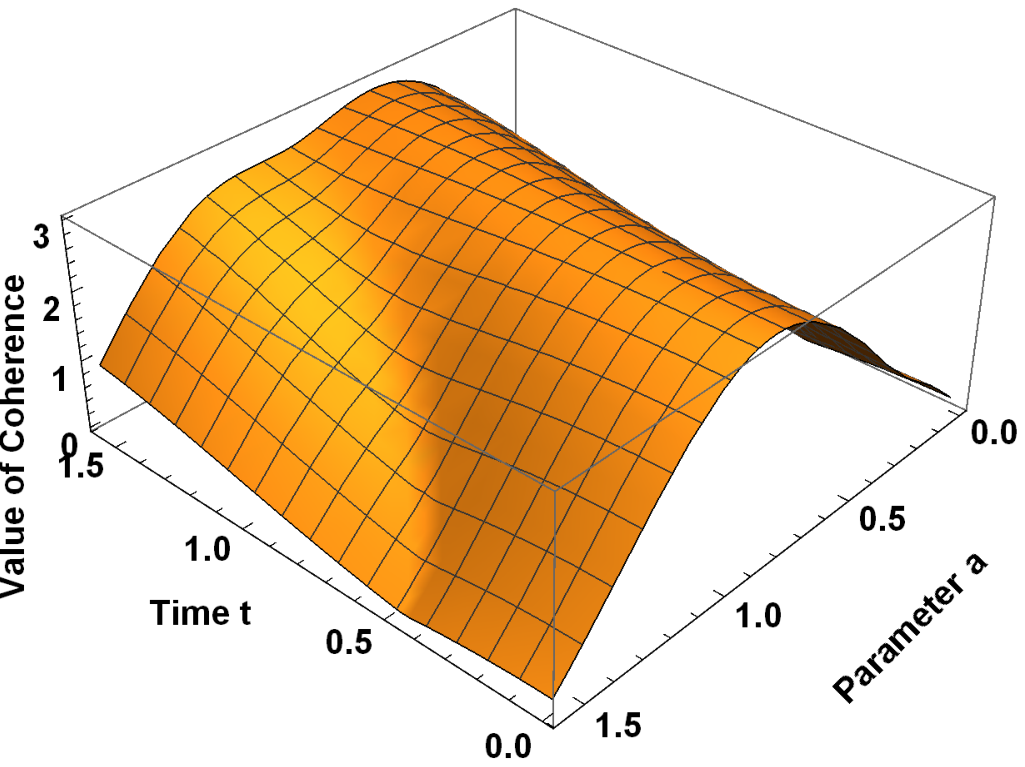}
\end{tabular}
\end{center}
\begin{center}
\begin{tabular}{ccc}
(e) & & (f) \\
\includegraphics[height=3.0cm]{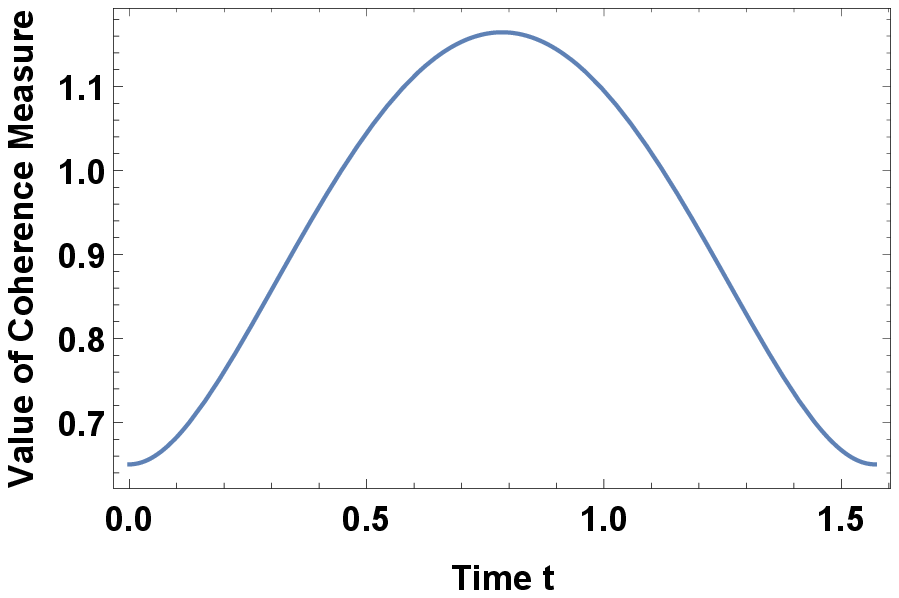} & & \includegraphics[height=3.0cm]{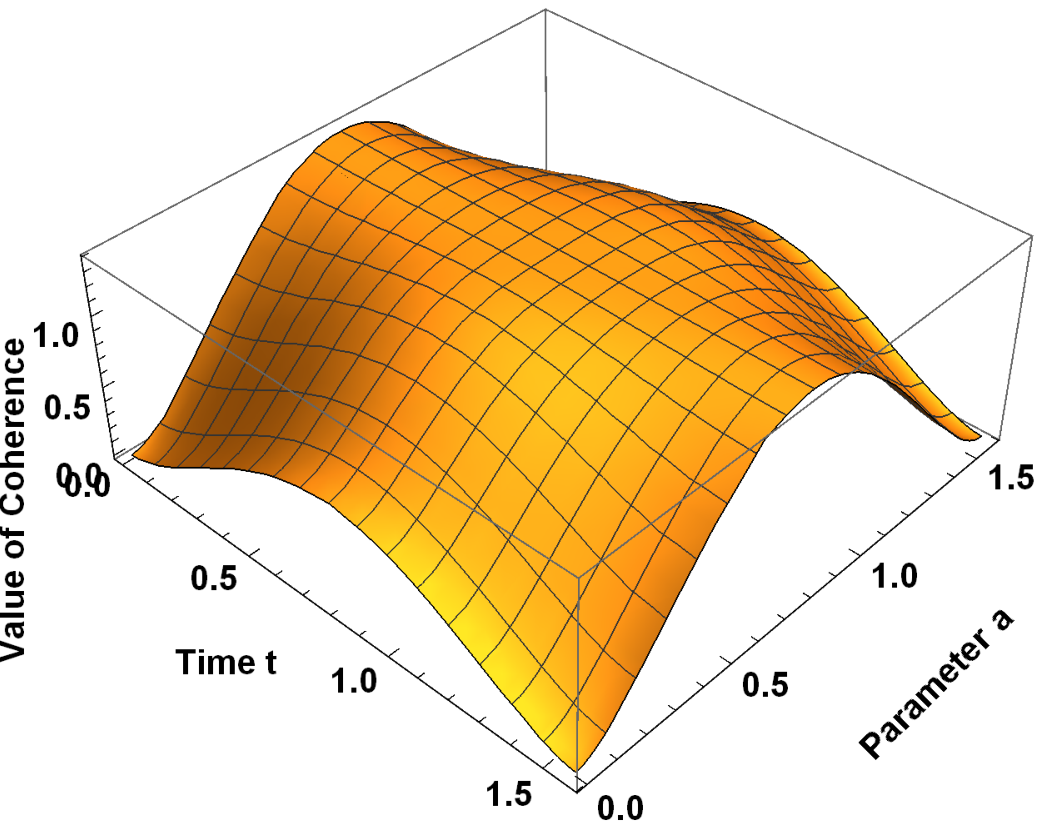} \\
(g) & & (h) \\
\includegraphics[height=3.0cm]{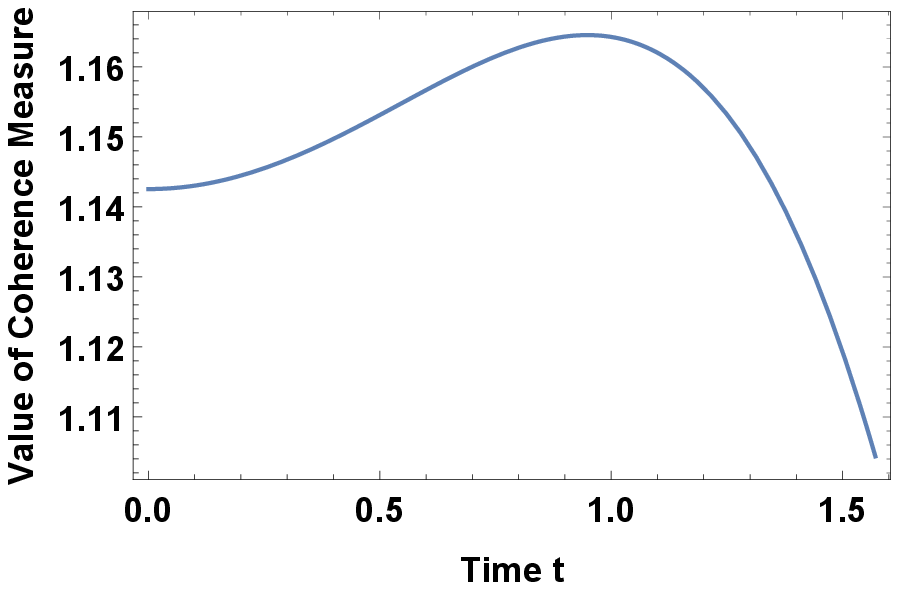} & & \includegraphics[height=3.0cm]{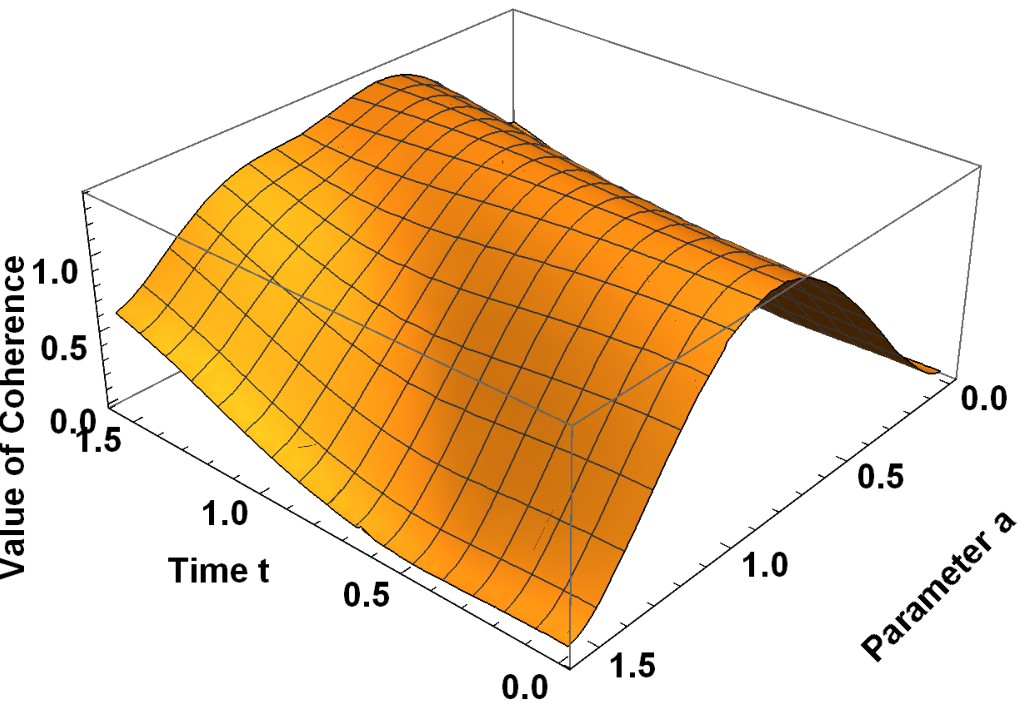}
\end{tabular}
\end{center}
\caption{The changes in value of $C_{l_1}$ measure: (a) for some exemplary states A and B given in (\ref{lbl:eq:given:state:MS:JW:CN:2018}); (b) for a generalized case given in (\ref{lbl:eq:state:for:a:paramter:MS:JW:CN:2018}). Plots (c) and (d) present the changes in value of Coherence measure $C_{l_1}$ when the noise, generated by DM interaction with intensity $D_z=0.5$, is present: (c) for some exemplary states A and B given in (\ref{lbl:eq:given:state:MS:JW:CN:2018}); (d) for a generalized case given in (\ref{lbl:eq:state:for:a:paramter:MS:JW:CN:2018}). The changes in value of relative entropy measure $C_{re}$: (e) for some particular states; (f) for states A and B given in Eq.~(\ref{lbl:eq:state:for:a:paramter:MS:JW:CN:2018}). Plots (g) and (h) show the changes in value of $C_{re}$ measure when the noise, generated by DM interaction with intensity $D_z=0.5$, is present: (g) for states given in~(\ref{lbl:eq:given:state:MS:JW:CN:2018}); (h) for states A and B given in~(\ref{lbl:eq:state:for:a:paramter:MS:JW:CN:2018})
}
\label{lbl:fig:coherence:for:A0:and:AB:states:MS:JW:CN:2018}
\label{lbl:fig:coherence:for:A0:and:AB:states:DM:interaction:MS:JW:CN:2018}
\label{lbl:fig:coherence:relative:entropy:for:A0:and:AB:states:MS:JW:CN:2018}
\label{lbl:fig:coherence:relative:entropy:for:A0:and:AB:states:with:DM:MS:JW:CN:2018}
\end{figure}

The $C_{l_1}$ measure, utilized in quantum systems without presence of noise, offers also the direct possibility of checking if states $A$ and $B$ are the same -- that is if the switch operates on state $\mket{AA1}$.

\begin{theorem}
For the quantum switch defined as operation $U_{qs}$, like in (\ref{lbl:eq:unitary:evolution:for:QS:MS:JW:CN:2018}), the value of $C_{l_1}$ for state $\mket{AA1}$ remains unchanged in time $t$.
\label{lbl:thm:coherence:for:AA:state:MS:JW:CN:2018}
\end{theorem}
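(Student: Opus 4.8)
The plan is to specialize the post-switch state (\ref{lbl:eqn:QS:qstate:MS:JW:CN2018}) to the diagonal case $\mket{A}=\mket{B}$ and then read off the modulus of every amplitude. Concretely, I would set $\alpha_0=\alpha_1=:\alpha$ and $\beta_0=\beta_1=:\beta$. The decisive observation is that the two ``middle'' amplitudes, sitting at the basis vectors $\mket{011}$ and $\mket{101}$, both collapse to the \emph{same} expression $\cos(t)\,\alpha\beta-\imag\sin(t)\,\alpha\beta=\alpha\beta\,e^{-\imag t}$. Thus the whole time dependence of the evolved state is reduced to a single phase factor $e^{-\imag t}$ multiplying the constant $\alpha\beta$, whose modulus is simply $|\alpha|\,|\beta|$ independently of $t$.

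Next I would discard the control qubit. Since every nonzero amplitude of the specialized state carries $\mket{\cdot\cdot 1}$, the state factorizes as $\mket{\psi_{AB}}\otimes\mket{1}$, so the partial trace over $C$ is immediate and yields the rank-one matrix $\rho_{AB}=\mket{\psi_{AB}}\mbra{\psi_{AB}}$ with $\mket{\psi_{AB}}=\bigl(\alpha^2,\ \alpha\beta\,e^{-\imag t},\ \alpha\beta\,e^{-\imag t},\ \beta^2\bigr)^{\mathsf T}$. For a pure state the off-diagonal entries satisfy $\rho_{ij}=\psi_i\overline{\psi_j}$, hence $|\rho_{ij}|=|\psi_i|\,|\psi_j|$. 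Each factor $|\psi_i|\in\{|\alpha|^2,\,|\alpha|\,|\beta|,\,|\beta|^2\}$ is manifestly independent of $t$, because the only $t$-bearing quantity $e^{-\imag t}$ has unit modulus and disappears under $|\cdot|$.

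Finally I would invoke the definition (\ref{lbl:eq:norm1:cocherence:measure:MS:JW:CN:2018}) of $C_{l_1}$ as the sum of the moduli of the off-diagonal entries. Summing the six pairs (doubled by Hermiticity) and using normalization $|\alpha|^2+|\beta|^2=1$ produces a closed form such as $C_{l_1}(\rho_{AB})=4\,|\alpha|\,|\beta|\,\bigl(1+|\alpha|\,|\beta|\bigr)$, which contains no $t$. Equivalently, one may substitute $\alpha_0=\alpha_1$ and $\beta_0=\beta_1$ directly into the general formula (\ref{lbl:eq:cocherence:for:AB:MS:JW:CN:2018}) and watch each $\cos t$/$\sin t$ combination reassemble into a modulus-one phase; either route establishes the claim.

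The only genuine (and mild) obstacle I anticipate is the algebraic verification that the two middle amplitudes really coincide and factor as $\alpha\beta\,e^{-\imag t}$; once that is seen, the $t$-independence of $C_{l_1}$ is a one-line consequence of the fact that $C_{l_1}$ sees amplitudes only through their absolute values. No appeal to the monotonicity or convexity axioms (P2)--(P4) is needed -- the statement is purely computational and rests entirely on the phase structure of the switch's evolution restricted to the symmetric input $\mket{AA1}$.
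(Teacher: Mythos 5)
Your proposal is correct and follows essentially the same route as the paper: specialize to $\alpha_0=\alpha_1$, $\beta_0=\beta_1$, and verify by direct computation that the $l_1$-coherence loses its $t$-dependence; your closed form $4|\alpha||\beta|\left(1+|\alpha||\beta|\right)$ coincides with the paper's $4|\alpha_0\beta_0|\left(|\alpha_0|^2+|\alpha_0\beta_0|+|\beta_0|^2\right)$ once normalization is applied. The only difference is cosmetic: you make explicit the reason the time dependence vanishes (the two middle amplitudes coincide and carry only the unit-modulus factor $e^{-\imag t}$, which $C_{l_1}$ cannot see), whereas the paper simply states the result of substituting into Eq.~(\ref{lbl:eq:cocherence:for:AB:MS:JW:CN:2018}).
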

\begin{proof}
The above theorem may be proofed by calculating the value of $C_{l_1}$ for the state $\mket{AA1}$. After some transformations of eq.~(\ref{lbl:eq:cocherence:for:AB:MS:JW:CN:2018}) we obtain:
\begin{equation}
C_{l_1}(\rho_{AA}) = 4 |\alpha_0\beta_0| \left( |\alpha_0|^2 + |\alpha_0\beta_0| + |\beta_0|^2 \right) .
\end{equation}
\end{proof}
A consequence described by Thm.~\ref{lbl:thm:coherence:for:AA:state:MS:JW:CN:2018} may be observed at Fig.~\ref{lbl:fig:coherence:for:A0:and:AB:states:MS:JW:CN:2018} in case (b) where the top of the chart is flatten for $a=\pi/4$.

If the noise modelled by DM interaction is present in the system, the $C_{l_1}$ measure shows the distortions' influence on a quantum state:  		
\begin{multline}
C_{l_1}^{\mathrm{DM}}(\rho_{AB}) = 2 \left| \left(\xi  \alpha_1 \beta_0 + \eta  \alpha_0 \beta_1 \right) \left(\zeta  \alpha_1 \beta_0 + \xi \alpha_0 \beta_1\right)\right| \\ 
+ 2 \left| \alpha_0 \alpha_1 \left(\zeta  \alpha_1 \beta_0 + \xi  \alpha_0 \beta_1\right)\right| + 2 \left| \beta_0 \beta_1 \left(\zeta  \alpha_1 \beta_0 + \xi  \alpha_0 \beta_1\right)\right| \\
+ 2 \left| \alpha_0 \alpha_1 \left( \xi  \alpha_1 \beta_0 + \eta \alpha_0 \beta_1\right)\right| +2 \left| \beta_0 \beta_1 \left(\xi \alpha_1 \beta_0+\eta  \alpha_0 \beta_1\right)\right| + 2 \left| \alpha_0 \alpha_1 \beta_0 \beta_1\right|
\label{lbl:eq:cl1:norm:for:dm::MS:JW:CN:2018}
\end{multline}
where the following denotations were used:
\begin{eqnarray}
\xi & = & \frac{1}{2} e^{-\sqrt{-4 D_z^2-t^2}}+\frac{1}{2} e^{\sqrt{-4 D_z^2-t^2}} \notag \\
\eta & = & \frac{e^{\sqrt{-4 D_z^2-t^2}} (-2 D_z - \imag t)}{2 \sqrt{-4 D_z^2-t^2}}-\frac{e^{-\sqrt{-4 D_z^2-t^2}} (-2 D_z - \imag t)}{2 \sqrt{-4 D_z^2-t^2}} \notag  \\
\zeta & = & \frac{e^{\sqrt{-4 D_z^2-t^2}} (2 D_z - \imag t)}{2 \sqrt{-4 D_z^2-t^2}}-\frac{e^{-\sqrt{-4 D_z^2-t^2}} (2 D_z - \imag t)}{2 \sqrt{-4 D_z^2-t^2}} \notag 
\end{eqnarray}
We can observe that time $t$ and the level of noise $D_z$ directly affect a quantum state by modelling its amplitudes. Naturally, these modifications cause changes of the $C_{l_1}$ measure value. 

Utilizing the definitions of states given in (\ref{lbl:eq:given:state:MS:JW:CN:2018}) and (\ref{lbl:eq:state:for:a:paramter:MS:JW:CN:2018}) we present an exemplary process of $C_{l_1}$ value's changes when the noise is generated by DM interaction, what is depicted at Fig.~\ref{lbl:fig:coherence:for:A0:and:AB:states:DM:interaction:MS:JW:CN:2018}. The given characteristic was calculated for $D_z=0.5$. 

The properties (P1)--(P4) indicate that the values of relative entropy of coherence measure $C_{re}$ also correctly depict the differences in a system's operating without and with presence of the DM interaction. For quantum states given in Eq.~(\ref{lbl:eq:given:state:MS:JW:CN:2018}) and Eq.~(\ref{lbl:eq:state:for:a:paramter:MS:JW:CN:2018}) values of relative entropy measure $C_{re}$ are shown at Fig.~\ref{lbl:fig:coherence:relative:entropy:for:A0:and:AB:states:MS:JW:CN:2018}. Additionally, Fig.~\ref{lbl:fig:coherence:relative:entropy:for:A0:and:AB:states:with:DM:MS:JW:CN:2018} demonstrates values of $C_{re}$ measure with distortions generated by DM interaction.

Naturally, the Thm.~\ref{lbl:thm:coherence:for:AA:state:MS:JW:CN:2018} may be also based on relative entropy measure. It may be observed that the value of this measure is:
\begin{multline}
C_{\mathrm{re}} ( \rho_{AA} )= -2 \left(\left| \alpha_0\right|^2 \left| \beta_0\right| ^2 \left(\log \left(\left| \alpha_0\right| ^2 \left| \beta_0\right|^2\right)\right) \right. \\ \left. + 2 \left| \alpha_0\right|^4 (\log  \left| \alpha_0\right| )+2 \left| \beta_0\right| ^4 (\log  \left| \beta_0\right|)\right),
\end{multline}
and again the value of time $t$ is not used.

Regardless the measure, we can observe the difference between values of $C_{l_1}$ in systems with and without noise. If the difference were always equal zero that implies the distortions are impossible to detect.

\begin{proposition}
For the switch's states $\rho_{AB}$ and $\rho^{DM}_{AB}$ we denote the value of difference based on the $C_{l_1}$ measure: 
\begin{multline}
C^{\Delta}_{l_1}( \rho_{AB}, \rho^{DM}_{AB} ) = -2 \left| \zeta  \alpha_1 \beta_0+\xi  \alpha_0 \beta_1\right|  \left(\left| \xi  \alpha_1 \beta_0 + \eta  \alpha_0 \beta_1\right| +\left| \alpha_0 \alpha_1\right| +\left| \beta_0 \beta_1\right| \right) \\ 
- 2 \left(\left| \alpha_0 \alpha_1\right| +\left| \beta_0 \beta_1\right| \right) \left| \xi \alpha_1 \beta_0+\eta \alpha_0 \beta_1\right| + \frac{1}{2} \left| \left(\alpha_1 \beta_0+\alpha_0 \beta_1\right){}^2 \right. \\ 
\left. - e^{4 \imag t} \left(\alpha_1 \beta_0-\alpha_0 \beta_1\right){}^2\right| + 2 \left(\left| \alpha_0 \alpha_1\right| +\left| \beta_0 \beta_1\right| \right) \left(\left| \cos(t) \alpha_0 \beta_1 - \imag \sin(t) \alpha_1 \beta_0\right| \right. \\ \left.
+ \left| \cos(t) \alpha_1 \beta_0 - \imag \sin(t) \alpha_0 \beta_1\right| \right),
\end{multline}
where $\xi$, $\eta$, $\zeta$ are denoted as in Eq.~(\ref{lbl:eq:cl1:norm:for:dm::MS:JW:CN:2018}).
\label{lbl:prop:diff:concurense:measure:MS:JW:CN:2018}
\end{proposition}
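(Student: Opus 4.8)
The plan is to establish the claimed identity by direct subtraction, writing $C^{\Delta}_{l_1}(\rho_{AB}, \rho^{DM}_{AB}) = C_{l_1}(\rho_{AB}) - C_{l_1}^{DM}(\rho_{AB})$ and substituting the two closed forms already derived: equation (\ref{lbl:eq:cocherence:for:AB:MS:JW:CN:2018}) for the noise-free amplitudes and equation (\ref{lbl:eq:cl1:norm:for:dm::MS:JW:CN:2018}) for the DM case. The first routine observation is that both expressions contain the common term $2|\alpha_0\alpha_1\beta_0\beta_1|$, which cancels immediately, so the fully off-diagonal amplitude product contributes nothing to the difference.

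Next I would reorganize the two groups of surviving terms by collecting common moduli. On the DM side the five remaining terms each carry either the factor $|\zeta\alpha_1\beta_0 + \xi\alpha_0\beta_1|$ or the factor $|\xi\alpha_1\beta_0 + \eta\alpha_0\beta_1|$; factoring these out reproduces exactly the first two lines of the stated formula, namely $-2|\zeta\alpha_1\beta_0 + \xi\alpha_0\beta_1|(|\xi\alpha_1\beta_0 + \eta\alpha_0\beta_1| + |\alpha_0\alpha_1| + |\beta_0\beta_1|)$ together with $-2(|\alpha_0\alpha_1| + |\beta_0\beta_1|)|\xi\alpha_1\beta_0 + \eta\alpha_0\beta_1|$. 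On the noise-free side, writing $A = |\alpha_0\alpha_1|$, $B = |\beta_0\beta_1|$, $U = |\cos(t)\alpha_0\beta_1 - \imag\sin(t)\alpha_1\beta_0|$ and $V = |\cos(t)\alpha_1\beta_0 - \imag\sin(t)\alpha_0\beta_1|$, the terms in (\ref{lbl:eq:cocherence:for:AB:MS:JW:CN:2018}) collapse to $2T_1 + 2(A+B)(U+V) + 2UV$, where $T_1$ is the cancelled term. The piece $2(A+B)(U+V)$ is precisely the last line of the claim, so the only remaining item to identify is the product $2UV$.

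The crux of the proof is to rewrite $2UV$ in the compact exponential form $\frac{1}{2}|(\alpha_1\beta_0 + \alpha_0\beta_1)^2 - e^{4\imag t}(\alpha_1\beta_0 - \alpha_0\beta_1)^2|$. Setting $p = \alpha_0\beta_1$ and $q = \alpha_1\beta_0$ and multiplying out the two complex amplitudes inside $U$ and $V$ yields the single complex number $\cos(2t)\,pq - \frac{\imag}{2}\sin(2t)(p^2 + q^2)$, so that $UV$ is its modulus. The factorization identity $(q + p)^2 - e^{4\imag t}(q - p)^2 = 4 e^{2\imag t}\bigl(\cos(2t)\,pq - \frac{\imag}{2}\sin(2t)(p^2 + q^2)\bigr)$, obtained by pulling out $e^{2\imag t}$ and applying the Euler forms of $\sin(2t)$ and $\cos(2t)$, shows the bracketed quantity equals $4 e^{2\imag t}$ times that number. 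Since $|e^{2\imag t}| = 1$, taking moduli turns the factor $4$ into the prefactor $\frac12$, which matches the third line of the formula exactly. I expect this exponential rewriting to be the only genuinely non-mechanical step: the rest is bookkeeping of absolute-value terms drawn from the two source equations, whereas the double-angle collapse together with the cancellation of $e^{2\imag t}$ under the modulus is where the apparent gap between the trigonometric presentation of $\rho_{AB}$ and the hyperbolic/exponential presentation of $\rho^{DM}_{AB}$ is reconciled.
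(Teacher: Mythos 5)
Your proposal is correct: the paper states this proposition without any proof, and direct subtraction of Eq.~(\ref{lbl:eq:cocherence:for:AB:MS:JW:CN:2018}) from Eq.~(\ref{lbl:eq:cl1:norm:for:dm::MS:JW:CN:2018}) (with the common $2|\alpha_0\alpha_1\beta_0\beta_1|$ term cancelling) is exactly the intended, and essentially the only, route. Your key identity checks out --- writing $p=\alpha_0\beta_1$, $q=\alpha_1\beta_0$, one has $(q+p)^2-e^{4\imag t}(q-p)^2=4e^{2\imag t}\bigl(\cos(2t)pq-\tfrac{\imag}{2}\sin(2t)(p^2+q^2)\bigr)$, so taking moduli gives $\tfrac12\bigl|(q+p)^2-e^{4\imag t}(q-p)^2\bigr|=2UV$ --- and this supplies the one non-mechanical step the paper leaves entirely implicit.
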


Utilizing Prop.~\ref{lbl:prop:diff:concurense:measure:MS:JW:CN:2018} we can directly present an exemplary values of errors occurring during the switch's operating what is depict at Fig.~\ref{lbl:fig:coherence:delta:for:A0:and:AB:states:with:DM:MS:JW:CN:2018}. It should be stressed that the value of difference for exemplary plots at Fig.~\ref{lbl:fig:coherence:delta:for:A0:and:AB:states:with:DM:MS:JW:CN:2018} equals zero only for two points of time t.

An analysis of works \cite{Ekert2002} and \cite{DAriano2005} shows that it is possible to design a quantum circuit for an estimation of Coherence measure value. Fig.~\ref{lbl:fig:coherence:circuit:MS:JW:CN:2018} depicts a circuit built to realize this task. The circuit takes an advantage on quantum states overlapping to estimate a value of Coherence measure -- it is approximated with use of quadratic functional estimation of given density operator.

A significant task is to calculate the values of input states $\mket{\alpha_{p_i}}$ and to denominate the form of gate $\pi_i$, which usually is the SWAP gate. In \cite{Girolami2014} it was presented that using the circuit shown at Fig.~\ref{lbl:fig:coherence:circuit:MS:JW:CN:2018} we can estimate the value of Coherence measure utilizing the Wigner-Yanase-Dyson skew information \cite{Wigner1963}. 

We would like to stress that the estimation of Coherence measure's value for states $\mket{A}$ and $\mket{B}$	may be calculated by a measurement, for example, of state $\mket{\alpha_{p_0}}$, what was shown in \cite{Ekert2002}. The probability that $\mket{\alpha_{p_0}}=\mket{0}$ may be estimated as:
\begin{equation}
\mtr{ \rho_A \rho_B} = \pi_0 = 2 P_0 - 1 
\end{equation}
where $\rho_A = \mket{A}\mbra{A}$, $\rho_B = \mket{B}\mbra{B}$ and $P_0$ is measure projector, whereas $\pi_0$ represents estimated value of coherence. The value of this probability estimates also functionals values, including the values of Coherence measure. An accuracy evaluation of the circuit (Fig.~\ref{lbl:fig:coherence:circuit:MS:JW:CN:2018}) is under the analysis in currently prepared paper \cite{MS:JW:in:preparation:2018}.

\begin{figure}
\begin{center}
\begin{tabular}{ccc}
(a) & &  (c) \\
\includegraphics[width=5.75cm]{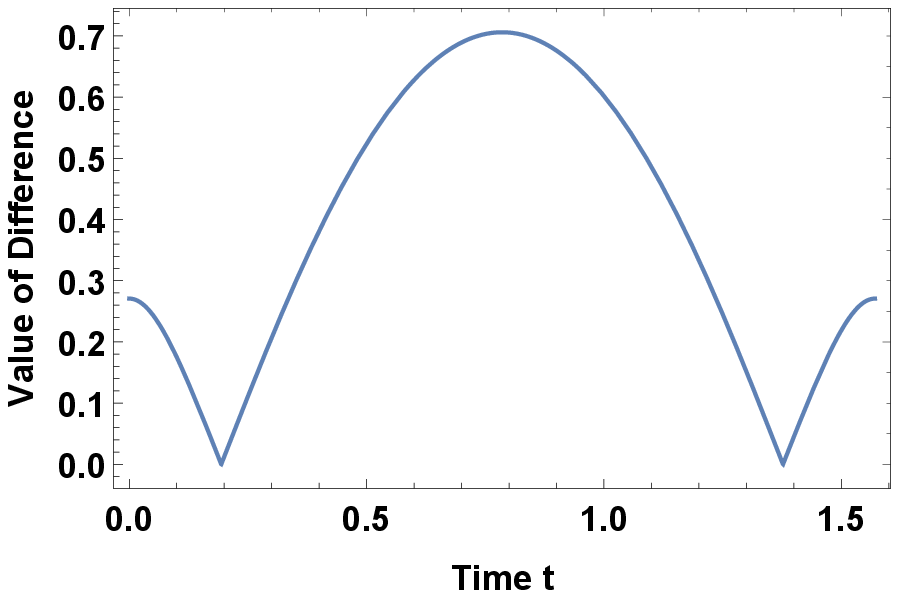} & & \includegraphics[width=5.75cm]{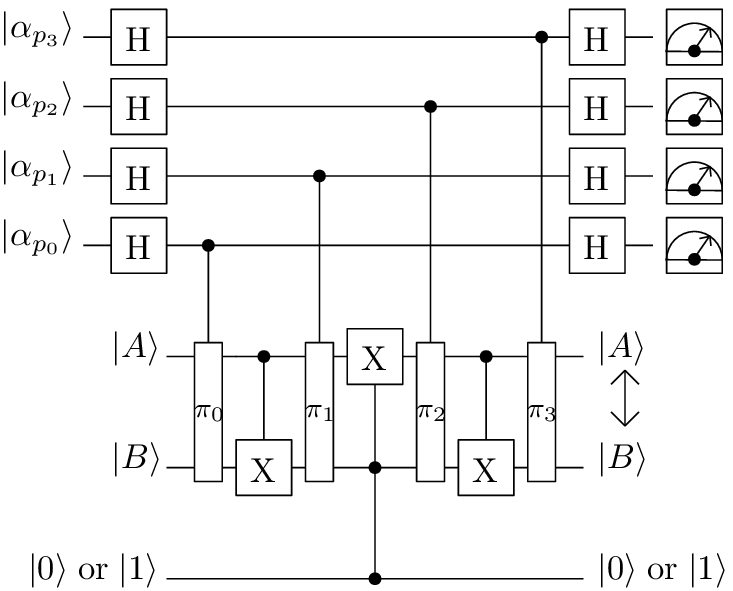} \\
(b) & &  \\
\includegraphics[width=5.75cm]{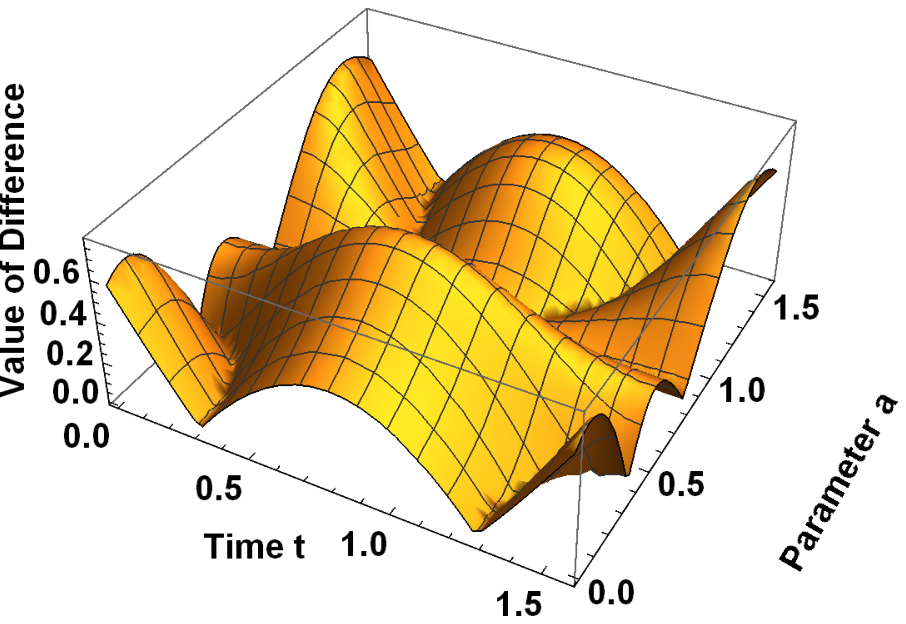} & & {\parbox[b]{5cm}{A quantum circuit (c) for estimation of Coherence measure values during the operating of quantum switch. The values of Coherence measure are calculated for four moments: before the switch starts operating, in two moments of operating and at the end of the process. The controlled $\pi_i$ gate plays a main role in the estimation process of Coherence measure's value}}
\end{tabular}
\end{center}
\caption{The changes in value of absolute difference $C^{\Delta}_{l_1}$: (a) for states given in~(\ref{lbl:eq:given:state:MS:JW:CN:2018}); (b) for states A and B given in~(\ref{lbl:eq:state:for:a:paramter:MS:JW:CN:2018}). The parameter $D_z = 0.5$. Figure (c) represents a quantum circuit for estimation of Coherence measure}
\label{lbl:fig:coherence:delta:for:A0:and:AB:states:with:DM:MS:JW:CN:2018}
\label{lbl:fig:coherence:circuit:MS:JW:CN:2018}
\end{figure}

\section{Conclusions} \label{lbl:sec:conclusions:MS:JW:CN:2018}

In this chapter we discussed utilizing the Coherence measure to trace and evaluate the correctness of quantum switch's operating. First, we described the switch with use of a Hamiltonian to be able to analyze the evolution of quantum system processed by the circuit shown at Fig.~\ref{lbl:fig:QS:EE:MS:JW:CN:2018}. We chose the Dzyaloshinskii-Moriya interaction as a source of noise, which was also modeled as a Hamiltonian. Then the numerical simulations were performed to evaluate a behavior of the circuit. As a tool to capture the differences between the switch operating, simulated with noise or without it, two Coherence measures were used: $l_1$-norm coherence ($C_{l_1}$) and relative entropy of coherence ($C_{\mathrm{re}}$).  

We can observe the difference of $C_{l_1}$ and $C_{\mathrm{re}}$ values if the system works with and without a noise. This shows that the Coherence measures allow to state if the switch operates properly. In addition, an interesting behavior of the system was observed when the switch worked on an initial state $\mket{AA1}$. The experiment showed that the value of $C_{l_1}$ was constant during the simulation -- for other states $\mket{AB1}$, where $A \ne B$, the value of measure evaluates in some characteristic periodic way. This implies that we can also check if the states are the same with use of $C_{l_1}$ measure and quantum switch.

Unfortunately, we also observed that for the systems without and with distortions, modeled as DM interaction, the differences in Coherence measure value are significant even if their intensity $D_z$ is quite low. It is inconvenient if we would like to capture the evolution of the system with the different levels of noise.	

We presented a quantum circuit which estimates the value of Coherence measure. Nowadays, technical solutions based on quantum optics \cite{Abubakar2015}, \cite{Mehic2016}, \cite{Aguado2017} and also physical implementations of qubits allow to build this kind of circuits with use of beam splitters, phase shifters and mirrors \cite{Milburn1989}, \cite{Lloyd1992}.

\subsubsection*{Acknowledgments}

We would like to thank for useful discussions with the~\textit{Q-INFO} group at the Institute of Control and Computation Engineering (ISSI) of the University of Zielona G\'ora, Poland. We would like also to thank to anonymous referees for useful comments on the preliminary version of this chapter. The numerical results were done using the hardware and software available at the ''GPU $\mu$-Lab'' located at the Institute of Control and Computation Engineering of the University of Zielona G\'ora, Poland. 

\label{lbl:references:ms:jw:cn2017}

\end{document}